\documentclass[twoside,twocolumn]{article}
\usepackage{acta}
\usepackage{amsthm,amsfonts,amsmath,amssymb,array}

\newcommand{\I}{\mathrm{i}}
\newcommand{\E}{\mathrm{e}}
\newcommand{\re}{\mathop{\mathrm{Re}}}
\newcommand{\im}{\mathop{\mathrm{Im}}}
\newtheorem{prop}{Proposition}

\newcommand{\Div}{\mathop{\mathrm{div}}}
\newcommand{\grad}{\mathop{\mathrm{grad}}}

\begin{document}
\Eingang{3}{9}{2015} \Annahme{3}{9}{2015} \Sachgebiet{1}
\PACS{43.30.Bp, 43.30.Dr, 43.20.Bi}
\Band{?} \Jahr{?}
\Heft{} \Ersteseite{1} \Letzteseite{}

\AuthorsI{M.~Yu.~Trofimov, S.~B.~Kozitskiy, A.~D.~Zakharenko}
\AddressI{Il'ichev Pacific Oceanological Institute,
43 Baltiyskay str., Vladivostok, 690041, Russia.\\
\hspace*{8pt}trofimov@poi.dvo.ru}

\Englishtitle{An energy flux conserving one-way coupled mode propagation model}
\Germanorfrenchtitle{}

\Kolumnentitel{A coupled mode propagation model}

\Englishabstract{%
A  pure analytic one-way coupled mode propagation model for resonant
interacting modes is obtained by the multiscale expansion method. It
is proved that the acoustic energy flux is conserved in this model
up to the first degree of the corresponding small parameter. The
test calculations with the COUPLE program give an excellent
agreement. } \Germanorfrenchabstract{}

\ScientificPaper

\section{Introduction}

The normal mode method is often used in the problems of acoustics.
It consists in the (local) separation of variables of the original
boundary value problem in such a way that on the cross-waveguide
direction we have a spectral problem, from which the normal modes
are obtained, and in the direction along the waveguide we have an
initial boundary value problem, determining the amplitudes of the
normal modes. In this formulation the field in a range-dependent
waveguide is expanded in terms of local modes with range-dependent
coefficients (mode amplitudes). In the adiabatic mode approximation
the independent propagation for each mode is assumed. For the
coupled mode propagation the derivation of the amplitude equations
is not that obvious, and it is considered here for the acoustic
case.

Adiabatic and coupled  mode acoustic equations appeared as a
convenient tool for solving problems of ocean acoustics since the
works of A. D. Pierce~\cite{pierce}, R. Burridge \& H.
Weinberg~\cite{Burridge}, J. A. Fawcett~\cite{fawcett}, and M. B.
Porter~\cite{couple,cracken}.
In all of these papers excepting~\cite{Burridge} the method of
multiscale expansions was not used~\cite{nay,tr_na}. In this work we
show that the systematic use of this method allows to obtain
unidirectional equations that produce, in the considered numerical
examples, the same results for the transmission losses as the COUPLE
2 way equations~\cite{cpl}, when the range-dependent waveguide is
approximated by range-independent stair steps and the coupled mode
solution is obtained by matching the solutions of the wave equation
for neighboring stair steps at their common boundary. Our approach
in principle can be extended to handle propagation in three
dimensions, whereas the discrete coupled mode method can not.
\par
For the obtained equations an important property of the acoustic energy
flux conservation is proved. Namely, if the difference between the wave numbers
of the modes is of order $1$ in the small parameter used in the
multiscale expansions method, then the acoustic flux (see the
definition and discussion in section~\ref{p_flux}) is of the same order, whereas the
acoustic Helmholtz equation possesses the property of the
energy flux conservation exactly.
\par
The boundary and interface conditions are derived simultaneously and
by the same method as the equations.

\section{Formulation of problem}\label{inter}
We consider the propagation of time-harmonic sound in the axially symmetric three-dimen\-si\-onal waveguide
$\Omega=\{(r,\phi,z)| 0 \leq r < \infty, 0\leq \phi < 2\pi, -H\leq z \leq 0 \}$ ($z$-axis is directed upward),
described by the acoustic Helmholtz equation
\begin{gather} \label{Helm}
\begin{aligned}
& \left(\gamma P_r \right)_r + \frac{1}{r}\gamma P_r
 + \left(\gamma P_z\right)_z + \gamma\kappa^2 P\\
& \qquad\qquad\qquad\qquad = \frac{-\gamma\delta(z-z_0)\delta(r)}{2\pi r}\,,
\end{aligned}
\end{gather}
where $\gamma = 1/\rho$, $\rho=\rho(r,z)$ is the density,
$\kappa(r,z)$ is the wavenumber. We assume the appropriate radiation
conditions at infinity in the $r,\phi$ plane, the pressure-release
boundary condition at $z=0$
\begin{gather} \label{Dir}
P=0\quad \text{at}\quad z=0\,,
\end{gather}
and the rigid boundary condition $\partial u/\partial z=0$ at $z=
-H$. The parameters of medium may be discontinuous at the
nonintersecting smooth interfaces  $z=h_1(r),\ldots,h_m(r)$, where
the usual interface conditions
\begin{gather}\label{InterfCond}
\begin{aligned}
& P_+ = P_-\,,\\
& \gamma_+ \left(\frac{\partial P}{\partial z}-h_r\frac{\partial P}{\partial r}\right)_+ =
\gamma_- \left(\frac{\partial P}{\partial z}-h_r\frac{\partial P}{\partial r}\right)_-
\end{aligned}
\end{gather}
are imposed. Hereafter we use the denotations $f(z_0,r)_+=\lim_{z\downarrow z_0}f(z,r)$ and
$f(z_0,r)_-=\lim_{z\uparrow z_0}f(z,r)$.
Without loss of generality we may consider the case $m=1$, so we set $m=1$ and denote $h_1$ by $h$.
\par
We introduce a small parameter $\epsilon$ (the ratio
of the typical wavelength to the typical size of medium inhomogeneities), the slow
variable $R=\epsilon r$ and postulate the following
expansions for the parameters $\kappa^2$, $\gamma$ and $h$:
\begin{gather*}
\kappa^2 = n_0^2(R,z) + \epsilon\nu(R,z)\,,\,\, \gamma  =
\gamma(R,z)\,,\,\, h = h(R)\,.
\end{gather*}
To model the attenuation effects we admit $\nu$ to be complex. Namely, we take $\im\nu = 2\eta\beta n_0$, where
$\eta = (40\pi\log_{10}e)^{-1}$ and $\beta$ is the attenuation  in decibels per wavelength. This implies that
$\im\nu\ge 0$.
\par
Consider a solution to the Helmholtz equation (\ref{Helm}) in the
form of the WKB-ansatz, where $\{\theta_j|j=M,\dots,N\}$ is a set of phases (fast variables):
\begin{gather}\label{ansatzR}
P = \sum_{j=M}^N(u_0^{(j)}(R,z) +\epsilon u_1^{(j)}(R,z) +\ldots)
\E^{\I\theta_j/\epsilon}\,.
\end{gather}
Introducing this anzatz into equation (\ref{Helm}), boundary condition (\ref{Dir}) and interface conditions
(\ref{InterfCond}), all rewritten in the slow variable, we obtain the sequence of the boundary value problems at each
order of $\epsilon$.

\section{The problem at $O(\epsilon^{0})$}
To obtain the normal modes we first consider the WKB anzats in the form $P=u_{0}^{(j)}(R,z)\E^{\I\theta_j(R,z)/\epsilon}$.
Further we can omit $j$.
From the equations at $O(\epsilon^{-2})$ and $O(\epsilon^{-1})$ we can conclude that $\theta$ is independent of $z$.

At $O(\epsilon^{0})$ now we have
\begin{gather}\label{E0}
(\gamma u_{0z})_z + \gamma n^2_0 - \gamma(\theta_R)^2u_0 = 0\,,
\end{gather}
with the interface conditions of the order $\epsilon^{0}$
\begin{gather}\label{InterfaceE0}
\begin{aligned}
&u_{0+} = u_0{-}\,,\\
&\left(\gamma \frac{\partial u_0}{\partial z}\right)_+ =
\left(\gamma \frac{\partial u_0}{\partial z}\right)_-\quad\mbox{at}\quad z=h\,,
\end{aligned}
\end{gather}
and boundary conditions $u=0$ at $z=0$ and $\partial u/\partial z=0$ at $z= -H$.
We seek a solution to problem (\ref{E0}), (\ref{InterfaceE0}) in the form
$u_0 = A(R)\phi(R,z)\,.$
From eqs.~(\ref{E0}) and (\ref{InterfaceE0}) we obtain the following spectral problem
for $\phi$ with the spectral parameter $k^2 = (\theta_R)^2$
\begin{gather} \label{Spectral}
\begin{aligned}
&\left(\gamma\phi_z\right)_z + \gamma n_0^2 \phi - \gamma k^2 \phi=0\,,\\
&\phi(0) = 0\,,\quad
 \frac{\partial \phi}{\partial z}=0\quad \text{at}\quad z= -H\,,\\
&\phi_+ = \phi_-\,,\\
&\left(\gamma \frac{\partial \phi}{\partial z}\right)_+ =
\left(\gamma \frac{\partial \phi}{\partial z}\right)_-\quad\mbox{at}\quad z=h\,.
\end{aligned}
\end{gather}
This spectral problem, being considered in the Hilbert space
$L_{2,\gamma_0}[-H,0]$ with the scalar product and the normalizing
condition
\begin{gather*} \label{L2scalar}
(\phi,\psi) = \int_{-H}^{\,0}\gamma \phi\psi\,dz\,,\quad
(\phi,\phi) = \int_{-H}^{\,0}\gamma \phi^2\,dz\,,
\end{gather*}
has countably many solutions $(k_j^2,\phi_j)$, $j=1,2,\ldots$ where the eigenfunction can be chosen as real functions.
The eigenvalues $k_j^2$ are real and have $-\infty$
as a single accumulation point \cite{nai}.

\section{The derivatives of eigenfunctions and wavenumbers with respect to $R$}
Before considering the problem at $O(\epsilon^{1})$ we should
consider the calculation of the derivatives of the eigenfunctions
and wavenumbers with respect to $R$.
\par
Differentiating spectral problem (\ref{Spectral}) with respect to $R$ we obtain the boundary value problem
for $\phi_{jR}$
\begin{gather}\label{phi_X}
\begin{aligned}
&\left(\gamma\phi_{jRz}\right)_z + \gamma n_0^2 \phi_{jR} - \gamma k_j^2 \phi_{jR} =
 - \left(\gamma_{R}\phi_{jz}\right)_z -\\
& \qquad\qquad (\gamma n_0^2)_R \phi_{j} + 2k_{jR}k_{j}\gamma \phi_{j}
+\gamma_{R} k_j^2 \phi_{j}\,, \\
& \qquad \phi_{jR}(0) = 0\,, \quad  \phi_{jRz}(-H) = 0\,,
\end{aligned}
\end{gather}
with interface conditions at $z=h$
\begin{gather}\label{InterfCond_x}
\begin{aligned}
&\phi_{jR+}-\phi_{jR-} = -h_{R}(\phi_{jz+}-\phi_{jz-})\,, \\
&\qquad\gamma_{+}\phi_{jRz+} - \gamma_{-}\phi_{jRz-} = -\left(\gamma_{R+}\phi_{jz+}-\right.\\
&\left.\gamma_{R-}\phi_{jz-}\right) -
 h_{R}\left(\left((\gamma\phi_{jz})_z\right)_+ -
\left((\gamma\phi_{jz})_z\right)_-\right)\,.
\end{aligned}
\end{gather}
We search a solution to problem (\ref{phi_X}), (\ref{InterfCond_x}) in the form
\begin{gather*}
\phi_{jR} = \sum_{l=0}^\infty C_{jl}\phi_l\,,
\quad\text{where}\quad
C_{jl} = \int_{-H}^0 \gamma \phi_{jR}\phi_{l}\,dz\,.
\end{gather*}
Multiplying (\ref{phi_X}) by $\phi_l$ and then integrating resulting
equation from $-H$ to $0$ by parts twice with the use of
interface conditions (\ref{InterfCond_x}), we obtain
\begin{gather*}\label{Cjl_2}
\begin{aligned}
& \left(k_l^2 - k_j^2\right) C_{jl}  = \int_{-H}^0 \gamma_{R}\phi_{jz}\phi_{lz} \,dz
+ 2 k_{jR}k_j\delta_{jl} - \\
& - \int_{-H}^0 \left(\gamma n_0^2\right)_R \phi_j\phi_l \,dz
 + k_j^2\int_{-H}^0 \gamma_{R} \phi_j\phi_l \,dz + \\
& \left\{ h_{R}(\gamma^2\phi_{jz}\phi_{lz})_+
\left[\left(\frac{1}{\gamma}\right)_+-\left(\frac{1}{\gamma}\right)_-\right]-
\right. \\
&  \left.
\left.
h_{R}\phi_j\phi_l\left[
                    \left(\gamma\left(k_j^2-n_0^2\right)\right)_+
                    - \left(\gamma\left(k_j^2-n_0^2 \right)\right)_-
                 \right]\vphantom{\frac{1}{\gamma}}\right\}\right|_{z=h},
\end{aligned}
\end{gather*}
where $\delta_{jl}$ is the Kronecker delta. Note that $(\gamma^2\phi_{jz}\phi_{lz})_+=(\gamma^2\phi_{jz}\phi_{lz})_-$.

\section{The problem at $O(\epsilon^{1})$}

We now represent a solution to the Helmholtz equation (\ref{Helm})
in the form of anzats (\ref{ansatzR}). At $O(\epsilon^{1})$ we
obtain
\begin{gather}\label{E1}
\begin{aligned}
&\sum_{j=M}^N\left(\left(\gamma u^{(j)}_{1z}\right)_z + \gamma n_0^2 u^{(j)}_{1} - \gamma k_j^2 u^{(j)}_{1}\right)
\E^{\I\theta_j/\epsilon} = \\
& \sum_{j=M}^N\left( -\mathrm{i}\gamma_{R}k_j u^{(j)}_0 -2\mathrm{i}\gamma k_j u^{(j)}_{0R} -\mathrm{i}\gamma k_{jR} u^{(j)}_0\right.\\
& \qquad\qquad\qquad\left. - \mathrm{i}\gamma k_j\frac{1}{R}u^{(j)}_0-\nu\gamma u^{(j)}_0\right)\E^{\I\theta_j/\epsilon}\,,
\end{aligned}
\end{gather}
with the boundary conditions $u^{(j)}_1=0$ at $z=0$, $\partial u^{(j)}_1/\partial z=0$ at $z=-H$,
and the interface conditions at $z=h(R)$:
\begin{gather}\label{InterfCondE1}
\begin{aligned}
& \sum_{j=M}^N (u^{(j)}_{1+}-u^{(j)}_{1-})\E^{\I\theta_j/\epsilon} = 0\,,\\
& \sum_{j=M}^N [\gamma_{+}u^{(j)}_{1z+}-\gamma_{-}u^{(j)}_{1z-}\\
&\qquad\qquad+\mathrm{i}k_j h_{R}u^{(j)}_0(\gamma_{-}-\gamma_{+})]\E^{\I\theta_j/\epsilon} = 0\,.
\end{aligned}
\end{gather}
We seek a solution to problem (\ref{E1}), (\ref{InterfCondE1}) in the form
\begin{gather*}\label{anz1}
\begin{aligned}
& u_1^{(j)} = \sum_{l=0}^\infty B_{jl}(X,Y)\phi_l(z,X)\,,\\
& \text{where}\quad B_{jl} = \int_{-H}^0 \gamma u_1^{(j)}\phi_l\, dz\,.
\end{aligned}
\end{gather*}
Multiplying (\ref{E1}) by $\phi_l$ and then integrating resulting
equation from $-H$ to $0$ by parts twice with the use of interface
conditions (\ref{InterfCondE1}), we obtain
\begin{gather*}\label{Bjl_1}
\begin{aligned}
&\sum_{j=M}^N \left(\left.(k_l^2-k_j^2)B_{jl}\right.\right.\\
&\left.\left.\qquad\qquad-A_j\mathrm{i} k_j h_{R}\phi_j\phi_l\left[\gamma_{+}-\gamma_{-}\right]
\right|_{z=h}\right)\E^{\I\theta_j/\epsilon} \\
& = \sum_{j=M}^N\left(-\mathrm{i}k_j A_j \int_{-H}^0 \gamma_{R}\phi_j\phi_l\,dz\right. \\
& -2\mathrm{i}k_j A_j \int_{-H}^0 \gamma\phi_{jR}\phi_l\,dz
 -2\mathrm{i}k_j A_{j,R} \int_{-H}^0 \gamma\phi_{j}\phi_l\,dz\\
& - \mathrm{i}k_{j,R} A_j \int_{-H}^0 \gamma\phi_{j}\phi_l\,dz
 - \mathrm{i}k_j\frac{1}{R}A_{j}\int_{-H}^0 \gamma\phi_{j}\phi_l\,dz\\
&\qquad\qquad\qquad\qquad\left.- A_j \int_{-H}^0 \nu\gamma\phi_{j}\phi_l\,dz\right)\E^{\mathrm{i}\theta_j/\epsilon}\,.
\end{aligned}
\end{gather*}
The terms $(k_l^2-k_j^2)B_{jl}$ in these expressions can be omitted
because of the resonant condition $|k_l-k_j|=O(\epsilon$).

As
\begin{gather*}
\begin{aligned}
& -\mathrm{i}k_j A_j \int_{-H}^0 \gamma_{R}\phi_j\phi_l\,dz
-2\mathrm{i}k_j A_j \int_{-H}^0 \gamma\phi_{jR}\phi_l\,dz  \\
& = \mathrm{i}k_j A_j\left(C_{lj} - C_{jl}\right)
-\mathrm{i}k_jA_jh_{R}\phi_j\phi_l\left.\left[\gamma_{+}-\gamma_{-}\right]\right|_{z=h}\,,
\end{aligned}
\end{gather*}
we get, after some algebra,
\begin{gather*}
\begin{aligned}
&  \sum_{j=M}^N\left( \vphantom{\int_{-H}^0
\nu\gamma\phi_{j}\phi_l\,dz}\mathrm{i}k_j A_j\left(C_{lj} -
C_{jl}\right)
 -2\mathrm{i}k_j A_{j,R} \delta_{jl} - \mathrm{i}k_{jR} A_j\delta_{jl}\right.\\
& \left.\qquad -\mathrm{i}k_j\frac{1}{R}A_{j} \delta_{jl}  - A_j
\int_{-H}^0 \nu\gamma\phi_{j}\phi_l\,dz
\right)\E^{\I\theta_j/\epsilon}=0\,.
\end{aligned}
\end{gather*}

\begin{prop}\label{pr_MPE}
The solvability condition for the problem at $O(\epsilon^1)$  is a
system of equations for $l=M,\ldots,N$
\begin{gather}\label{MPE}
2\mathrm{i}k_l A_{l,R} + \mathrm{i}k_{l,R} A_l +
\mathrm{i}k_l\frac{1}{R}A_{l} +
\sum_{j=M}^N\alpha_{lj}A_j\E^{\theta_{lj}}= 0\,,
\end{gather}
where $\alpha_{lj}$ and $\theta_{lj}$ are given by the following
formulas
\begin{gather}\label{alpha}
\begin{aligned}
& \alpha_{lj}  = \int_{-\infty}^0 \gamma\nu \phi_j\phi_l\,dz
 -\mathrm{i}k_j\left(C_{lj} - C_{jl}\right)\,,\\
& \theta_{lj}  = \frac{\I}{\epsilon}(\theta_j- \theta_l)\,.
\end{aligned}
\end{gather}
\end{prop}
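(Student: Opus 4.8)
The statement is a Fredholm compatibility (solvability) condition, so the plan is to project the $O(\epsilon^1)$ equation onto the kernel of the leading-order operator and read off the result. The operator $\phi\mapsto(\gamma\phi_z)_z+\gamma n_0^2\phi$ subject to the boundary and interface conditions of (\ref{Spectral}) is self-adjoint in the weighted space $L_{2,\gamma}[-H,0]$; since its eigenvalues $k_j^2$ are distinct (they are only $O(\epsilon)$-close for the resonant modes, not equal), the operator $\mathcal{L}-k_l^2$ has one-dimensional kernel spanned by $\phi_l$. Hence the inhomogeneous problem (\ref{E1})--(\ref{InterfCondE1}) for $u_1^{(l)}$ is solvable only if its right-hand side is orthogonal to $\phi_l$. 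I would realize this orthogonality exactly as indicated above: multiply (\ref{E1}) by $\phi_l$, integrate over $[-H,0]$, and integrate by parts twice, using (\ref{InterfCondE1}) to dispose of the jumps at $z=h$. This gives the displayed relation whose left member carries the factors $(k_l^2-k_j^2)B_{jl}$ together with an interface contribution.

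Next I would use the resonance hypothesis exactly once. Because $|k_l-k_j|=O(\epsilon)$ for the interacting modes, $k_l^2-k_j^2=(k_l-k_j)(k_l+k_j)=O(\epsilon)$, so each $(k_l^2-k_j^2)B_{jl}$ and its companion interface $B$-term are one order smaller than the remaining terms and are dropped at $O(\epsilon^1)$. I would then substitute the identity expressing $-\mathrm{i}k_jA_j\int_{-H}^0\gamma_R\phi_j\phi_l\,dz-2\mathrm{i}k_jA_j\int_{-H}^0\gamma\phi_{jR}\phi_l\,dz$ in terms of $C_{lj}-C_{jl}$ (the step recorded as ``after some algebra'' above), leaving the single relation $\sum_{j=M}^N(\cdots)\E^{\I\theta_j/\epsilon}=0$ valid for each $l$.

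What is left is purely algebraic. The three terms carrying $\delta_{jl}$ collapse, by the orthonormality $\int_{-H}^0\gamma\phi_j\phi_l\,dz=\delta_{jl}$, onto $j=l$ and so share the single phase $\E^{\I\theta_l/\epsilon}$; dividing the whole relation by $-\E^{\I\theta_l/\epsilon}$ converts them into the transport part $2\mathrm{i}k_lA_{l,R}+\mathrm{i}k_{l,R}A_l+\mathrm{i}k_l\frac{1}{R}A_l$. The off-diagonal terms retain the residual phase $\E^{\I(\theta_j-\theta_l)/\epsilon}=\E^{\theta_{lj}}$, and the coefficient of $A_j\E^{\theta_{lj}}$ is precisely $\int_{-H}^0\gamma\nu\phi_j\phi_l\,dz-\mathrm{i}k_j(C_{lj}-C_{jl})=\alpha_{lj}$. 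Assembling these pieces yields (\ref{MPE}) with $\alpha_{lj}$, $\theta_{lj}$ as in (\ref{alpha}).

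I expect the main obstacle to be the bookkeeping rather than this final algebra. One must confirm that discarding the $(k_l^2-k_j^2)B_{jl}$ and interface $B$-terms is legitimate --- that under the resonance hypothesis they are genuinely $O(\epsilon)$ and do not feed back into the $O(\epsilon^1)$ balance --- and that the interface jumps proportional to $h_R[\gamma_+-\gamma_-]$ produced by the two integrations by parts cancel exactly against the interface term carried by the $C_{lj}-C_{jl}$ identity. Checking this cancellation, and keeping the sign and phase straight when dividing through by $\E^{\I\theta_l/\epsilon}$, is where a careless computation would go astray.
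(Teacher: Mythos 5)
Your proposal is correct and follows essentially the same route as the paper: projection onto $\phi_l$ via two integrations by parts with the interface conditions (\ref{InterfCondE1}), discarding $(k_l^2-k_j^2)B_{jl}$ by the resonance hypothesis, substituting the $C_{lj}-C_{jl}$ identity (whose interface term cancels the one produced by the integration by parts), and collecting the $\delta_{jl}$ terms after dividing by $-\E^{\I\theta_l/\epsilon}$. You also correctly identify the two delicate points --- the cancellation of the $h_R[\gamma_+-\gamma_-]$ jumps and the legitimacy of dropping the $B_{jl}$ terms --- which are exactly the steps the paper handles implicitly.
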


\section{Energy flux conservation for equations~(\ref{MPE})}\label{p_flux}

The importance of the energy conservation law is not doubtful.
For the wave equation such a law can be established~\cite{lan}.
Since the Helmholtz equation is obtained from the wave equation, the energy conservation law
is converted to the acoustic energy flux conservation property.
This is the reason why it is widely accepted~\cite{couple} that energy flux conservation should be maintained in any propagation model.
The acoustic energy flux averaged over the period is defined as
$$
J(r,z)= \frac{1}{2\omega}\gamma\im((\grad P(r,z))P^*(r,z))\,.
$$
From now on we drop  the inessential factor $1/2\omega$.
As is well known, if $P$ is a solution of the Helmholtz equation~(\ref{Helm}) then the corresponding energy flux is conserved, that is
$
 \Div J(r,z)=0\,.
$
With our boundary conditions we have also the conservation property
$$
\Div \int_{-H}^{0}J(r,z)\,dz=0\,.
$$
\begin{prop}\label{pr_flux}
Assume that $\im\bar\nu=0$. Let $\{A_j|j=M,\ldots N\}$ be a solution to equations~(\ref{MPE}).
Then for
$\displaystyle{ P =  \sum_{j=M}^N A_j\phi_j\E^{\I\theta_j/\epsilon}}$
we have
$
\displaystyle{\Div\int_{-H}^{0}J(r,z)\,dz=O(\epsilon^2)}\,.
$
\end{prop}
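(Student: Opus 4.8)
The plan is to reduce the depth-integrated divergence to a one-dimensional expression in the slow variable $R=\epsilon r$ and then to read off the leading balance from the mode equations~(\ref{MPE}). First I would dispose of the vertical part: writing $\Div J=\frac1r\partial_r(rJ_r)+\partial_z J_z$ and integrating in $z$, the term $\int_{-H}^0\partial_z J_z\,dz=[J_z]_{-H}^0$ vanishes because $P=0$ at $z=0$ and $P_z=0$ at $z=-H$, while the jump of $J_z=\gamma\im((\partial_z P)P^*)$ across $z=h$ cancels by the interface conditions (continuity of $\phi_j$ and of $\gamma\phi_{jz}$). Hence it suffices to estimate $\frac1r\partial_r(rF)$ with $F=\int_{-H}^0 J_r\,dz$, which I identify with $\Div\int_{-H}^0 J\,dz$.

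Next I would compute $F$ from $P=\sum_j A_j\phi_j\E^{\I\theta_j/\epsilon}$. Since $\partial_r=\epsilon\partial_R$ and $\partial_r\E^{\I\theta_j/\epsilon}=\I k_j\E^{\I\theta_j/\epsilon}$ (recall $k_j=\theta_{j,R}$), one has $P_r=\sum_j(\I k_jA_j\phi_j+\epsilon A_{j,R}\phi_j+\epsilon A_j\phi_{j,R})\E^{\I\theta_j/\epsilon}$. Multiplying by $\gamma P^*$, integrating in $z$, and using the orthonormality $\int_{-H}^0\gamma\phi_j\phi_l\,dz=\delta_{jl}$ together with $C_{jl}=\int_{-H}^0\gamma\phi_{jR}\phi_l\,dz$, the leading cross terms are annihilated by $\delta_{jl}$, and taking the imaginary part gives $F=F_0+\epsilon F_1$ with $F_0=\sum_j k_j|A_j|^2$ and $F_1=\sum_j\im(A_{j,R}A_j^*)+\im\sum_{j,l}A_jA_l^*C_{jl}\E^{\I(\theta_j-\theta_l)/\epsilon}$.

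Because $\frac1r\partial_r(rF)=\epsilon\left(\partial_R F+\tfrac1R F\right)$, splitting $F$ yields $\Div\int_{-H}^0 J\,dz=\epsilon\left(\partial_R F_0+\tfrac1R F_0\right)+\epsilon^2\left(\partial_R F_1+\tfrac1R F_1\right)$. The second bracket is $O(1)$: the only danger is $\partial_R$ hitting $\E^{\I(\theta_j-\theta_l)/\epsilon}$, which produces a factor $\I(k_j-k_l)/\epsilon$, but by the resonance condition $|k_l-k_j|=O(\epsilon)$ this factor is $O(1)$, so that whole contribution is genuinely $O(\epsilon^2)$. It remains to show $\partial_R F_0+\tfrac1R F_0=O(\epsilon)$. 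Here I would use~(\ref{MPE}): multiplying it by $A_l^*$, taking the imaginary part (so that $\im(2\I k_lA_{l,R}A_l^*)=2k_l\re(A_{l,R}A_l^*)$ matches the terms of $\partial_R F_0$), and summing over $l$ collapses exactly to $\partial_R F_0+\tfrac1R F_0=-\im\sum_{l,j}\alpha_{lj}A_jA_l^*\E^{\theta_{lj}}$.

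The main obstacle is to show that this quadratic form is $O(\epsilon)$. Setting $M_{lj}=\alpha_{lj}\E^{\theta_{lj}}$, the identity $\im\sum_{l,j}M_{lj}A_jA_l^*=\frac1{2\I}\sum_{l,j}(M_{lj}-M_{jl}^*)A_jA_l^*$ shows its imaginary part is governed only by the anti-Hermitian part of $M$. Since $\theta_{lj}$ is purely imaginary, $\overline{\E^{\theta_{jl}}}=\E^{\theta_{lj}}$, so the oscillatory factors survive intact and $M_{lj}-M_{jl}^*=(\alpha_{lj}-\alpha_{jl}^*)\E^{\theta_{lj}}$. Under the hypothesis $\im\nu=0$ (equivalently $\im\bar\nu=0$) the term $\int_{-H}^0\gamma\nu\phi_j\phi_l\,dz$ is real and symmetric in $j,l$, hence cancels, leaving $\alpha_{lj}-\alpha_{jl}^*=\I(k_l-k_j)(C_{lj}-C_{jl})$; the antisymmetric combination $C_{lj}-C_{jl}$ is $O(1)$ but is multiplied by $k_l-k_j=O(\epsilon)$ from resonance. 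Thus the leading bracket is $O(\epsilon)$, and altogether $\Div\int_{-H}^0 J\,dz=O(\epsilon^2)$. I expect the delicate points to be only the bookkeeping of the interface terms in the first step and the correct identification of the reality and symmetry of $\int_{-H}^0\gamma\nu\phi_j\phi_l\,dz$ against the antisymmetry of $C_{lj}-C_{jl}$.
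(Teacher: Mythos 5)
Your proposal is correct and follows essentially the same route as the paper: compute the depth-integrated radial flux from the ansatz (your $F_0+\epsilon F_1$ is exactly the bracket in the paper's equation~(\ref{eFlux})), then multiply~(\ref{MPE}) by $A_l^*$, take imaginary parts and sum, and use that $\alpha_{lj}-\alpha_{jl}^*=\I(k_l-k_j)(C_{lj}-C_{jl})$ when $\im\nu=0$. The only cosmetic difference is that the paper exhibits the exact vanishing of the whole $O(\epsilon)$ coefficient (keeping the $(k_l-k_j)C_{lj}$ term from differentiating the oscillatory factor inside that coefficient), whereas you estimate that term and the residual $\partial_RF_0+F_0/R$ separately via the resonance condition $|k_l-k_j|=O(\epsilon)$; both arguments are valid and rest on the same identities.
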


\begin{proof}
First calculate the divergence in the general form for the anzats used:
\begin{gather}\label{eFlux}
\begin{aligned}
&\Div\int_{-H}^{0}J(r,z)\,dz = \frac{1}{r}\frac{\partial}{\partial
r}\left\{r\left[\sum_{l=M}^N
k_l|A_l|^2\right.\right.\\
& +\epsilon\sum_{l=M}^N\sum_{j=M}^N\im\left(C_{lj}A_lA_j^*\E^{\I(\theta_l-\theta_j)/\epsilon)}\right)\\
& \qquad\qquad\qquad\left.\left.+\epsilon\sum_{l=M}^N \im(A_{l,R}A_l^*)\right]\right\}\\
& =\epsilon\sum_{l=M}^N\sum_{j=M}^N(k_l-k_j)C_{lj}
\re\left(A_jA_l^*\E^{\I(\theta_j-\theta_l)/\epsilon}\right)\\
& +\epsilon\sum_{l=M}^N
(k_l|A_l|^2)_R+\epsilon\frac{1}{R}\sum_{l=M}^N
k_l|A_l|^2+O(\epsilon^2)\,.
\end{aligned}
\end{gather}
Consider now the sum on $l$ of equations (\ref{MPE}) multiplied
by $A_l^*$ minus the conjugate equations multiplied by $A_l$
\begin{equation*}
\begin{aligned}
\sum_{l=M}^N\left[\vphantom{\sum_{l=M}^N}\left(2\mathrm{i}k_l
A_{l,R} + \mathrm{i}k_{l,R} A_l +
\mathrm{i}k_l\frac{1}{R}A_{l}\right.\right. \\
+\sum_{j=M}^N\left.\alpha_{lj}A_j\E^{\theta_{lj}}\vphantom{\frac{1}{R}}
\right)A_l^* - \\
\left((-2\mathrm{i}k_l A^*_{l,R} - \mathrm{i}k_{l,R} A^*_l -
\mathrm{i}k_l\frac{1}{R}A^*_{l}\right. \\+ \left.
\sum_{j=M}^N\left.\alpha^*_{lj}A^*_j\E^{\theta^*_{lj}}\vphantom{\frac{1}{R}}\right)A_l\right]= 0\,.\\
\end{aligned}
\end{equation*}
After some transformation we have:
\begin{gather*}
\begin{aligned}
\sum_{l=M}^N\sum_{j=M}^N(\alpha_{lj}A_j\E^{\theta_{lj}}A_l^* - \alpha_{lj}^*A_j^*\E^{\theta_{lj}^*}A_l)\\
+\sum_{l=M}^N 2\mathrm{i}((k_l |A_{l}|^2 )_R +
k_l\frac{1}{R}|A_{l}|^2)=0\,,
\end{aligned}
\end{gather*}
then substitute for $\alpha_{lj}$ its expression (\ref{alpha})
\begin{gather*}
\begin{aligned}
\sum_{l=M}^N\sum_{j=M}^N \left(-\mathrm{i}k_j(C_{lj}-C_{jl})A_j\E^{\I(\theta_j-\theta_l)/\epsilon}A_l^* + \right. \\
\left. -\mathrm{i}k_j(C_{lj}-C_{jl}) A_j^*\E^{\I(\theta_l-\theta_j)/\epsilon}A_l \right)\\
+\sum_{l=M}^N 2\mathrm{i}((k_l |A_{l}|^2 )_R +
k_l\frac{1}{R}|A_{l}|^2) =0\,,
\end{aligned}
\end{gather*}
and collect terms
\begin{gather*}
\begin{aligned}
\sum_{l=M}^N\sum_{j=M}^N \left(-\mathrm{i}k_j(C_{lj}-C_{jl})2\re(A_j\E^{\I(\theta_j-\theta_l)/\epsilon}A_l^*)\right)\\
+\sum_{l=M}^N 2\mathrm{i}((k_l |A_{l}|^2 )_R +
k_l\frac{1}{R}|A_{l}|^2) = 0\,,
\end{aligned}
\end{gather*}
write double sums separately for terms with $C_{lj}$ and $C_{jl}$
\begin{gather*}
\begin{aligned}
\sum_{l=M}^N\sum_{j=M}^N \left(-\mathrm{i}k_jC_{lj}2\re(A_j\E^{\I(\theta_j-\theta_l)/\epsilon}A_l^*)\right)\\
+\sum_{l=M}^N\sum_{j=M}^N \left(\mathrm{i}k_jC_{jl}2\re(A_j\E^{\I(\theta_j-\theta_l)/\epsilon}A_l^*)\right)\\
+\sum_{l=M}^N 2\mathrm{i}((k_l |A_{l}|^2 )_R +
k_l\frac{1}{R}|A_{l}|^2) = 0\,,
\end{aligned}
\end{gather*}
exchange indexes $l$ and $j$ in the second double sum and finally
get
\begin{gather*}
\begin{aligned}
\sum_{l=M}^N\sum_{j=M}^N \left(\mathrm{i}(k_l-k_j)C_{lj}2\re(A_j\E^{\I(\theta_j-\theta_l)/\epsilon}A_l^*)\right)\\
+\sum_{l=M}^N 2\mathrm{i}((k_l |A_{l}|^2 )_R +
k_l\frac{1}{R}|A_{l}|^2) = 0\,.
\end{aligned}
\end{gather*}
The last equation coincides modulo $2\mathrm{i}$ with the
$O(\epsilon)$-part of (\ref{eFlux}).
\end{proof}

\begin{figure*}
\includegraphics[width=\textwidth]{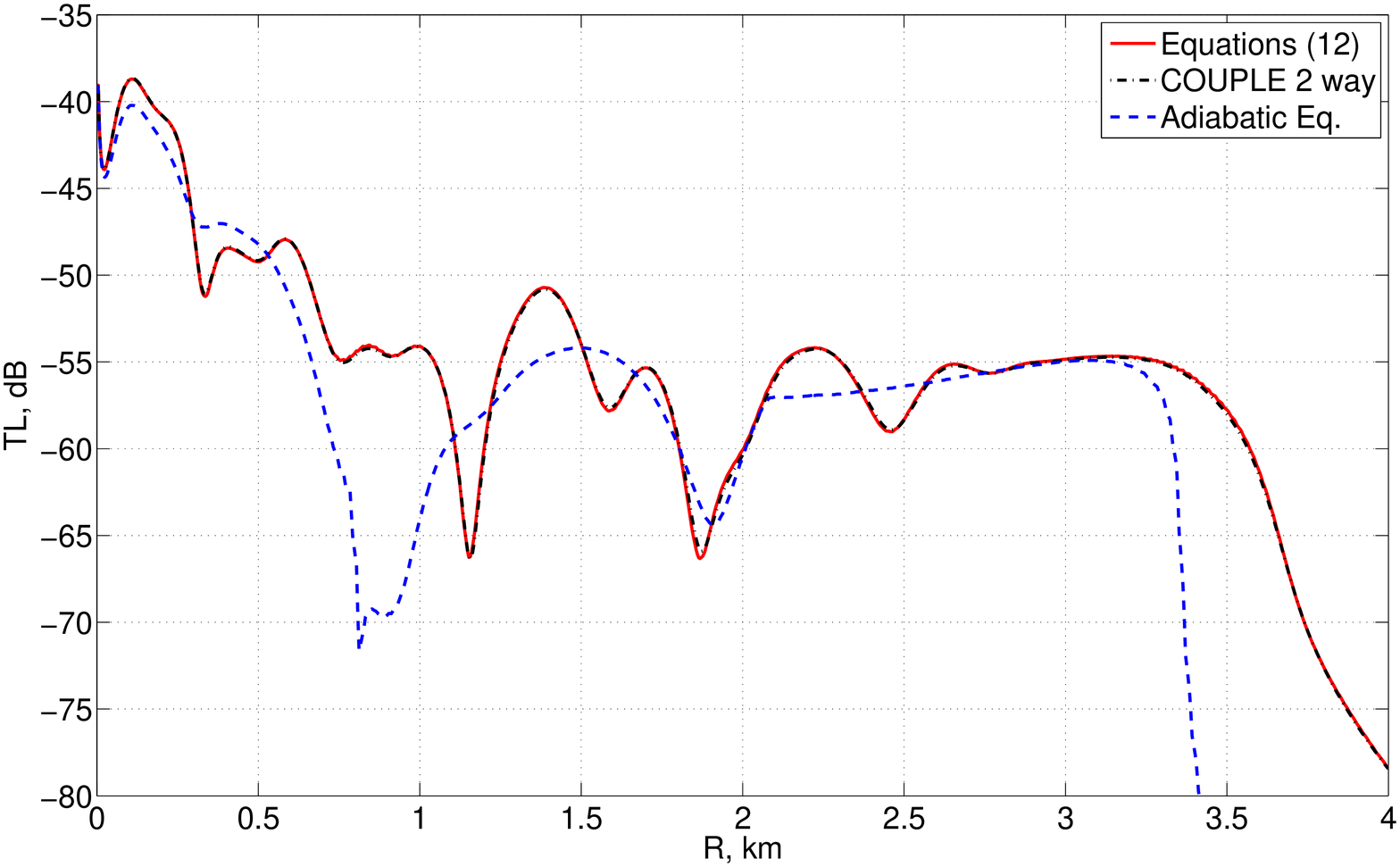}
\caption{The result of the numerical simulation with coupled-mode model
(\ref{MPE}) for the standard ASA wedge benchmark with absorbing
bottom compared to solutions obtained by the COUPLE
program~\cite{cpl} and the adiabatic mode equation. The source is
placed at 100 m, and the depth of the receiver is 30 m.}\label{Fig1}
\end{figure*}

\begin{figure*}
\includegraphics[width=\textwidth]{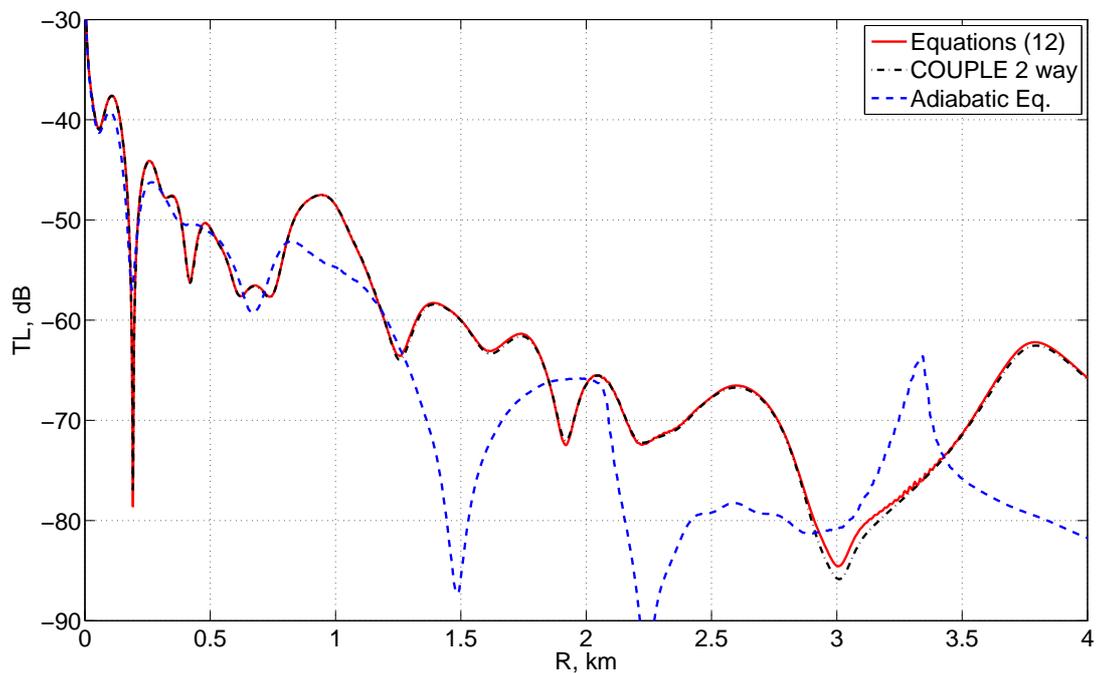}
\caption{The result of the numerical simulation with coupled-mode model
(\ref{MPE}) for the standard ASA wedge benchmark with absorbing
bottom compared to solutions obtained by the COUPLE
program~\cite{cpl} and the adiabatic mode equation. The source is
placed at 100~m, and the depth of the receiver is 150~m.}\label{Fig2}
\end{figure*}

\section{Numerical examples}

For the standard ASA wedge benchmark with the angle of wedge $\approx
2.86^\circ$  we numerically simulate the sound propagation
to illustrate the efficiency of our coupled-mode model.
The bottom depth decreases linearly from $200\,m$ at $X=0$ to
zero at $X=4\,km$.
The sound speed in the water is $1500\,m/sec$.
The sound speed in the bottom, which is considered liquid, is $1700\,m/sec$.
The bottom density is $1500\,kg/m^3$, the water density is $1000\,kg/m^3$.
We assume that there is no attenuation in the water layer, and in the bottom the attenuation
is $0.5\,dB/\lambda$. For the calculation purposes we restrict the total depth by
$1500\,m$ and suppose that in the bottom the absorption increases linearly from $0.5\,dB/\lambda$
at depth $1000\,m$ to $2.5\,dB/\lambda$ at depth $1500\,m$.
\par
The point source of frequency 25~Hz is placed at the depth of 100~m. In this case we have $44$ propagating modes.
Figure~\ref{Fig1} illustrates the transmission loss for the receiver
depth of 30~m.
Comparison with the numerical solution obtained by the COUPLE
program~\cite{cpl} shows that the mean square
difference between the two curves is about 0.15~dB.
A similar result is presented in figure \ref{Fig2} where the transmission
loss for the ASA wedge benchmark at depth 150 m is depicted.
In this case the mean square difference between our curve and the curve produced
with the COUPLE 2 way program is about 0.4~dB.

\section{Conclusion}
In this article a one-way coupled mode propagation model for the
resonantly interacting modes has been introduced.
The acoustic energy flux is conserved for the model with the accuracy
adequate to the used approximation (proposition~\ref{pr_flux}).
The test calculations were done for the ASA wedge benchmark and
proved excellent agreement with the COUPLE program~\cite{cpl}.

\section*{Acknowledgements}
The authors are grateful for the support of ``Exxon Neftegas Limited'' company.


\end{document}